\documentclass[conference]{IEEEtran}

\IEEEoverridecommandlockouts
\usepackage{graphicx}
\usepackage{amsmath,amssymb,amsthm}
\usepackage{textcomp}
\usepackage{cite}
\usepackage{mathtools}
\usepackage{subfig}
\usepackage[nolist]{acronym}
\usepackage[hidelinks]{hyperref}
\usepackage{cleveref}
\usepackage{relsize}
\usepackage{float}
\usepackage[ruled,vlined]{algorithm2e}
\usepackage{tabularx}
\usepackage{booktabs}
\usepackage{mathtools}
\usepackage{pgf,tikz}
\usepackage{pgfplots}
\usetikzlibrary{arrows,positioning} 
\usetikzlibrary{decorations.text}
\usepgfplotslibrary{external}
\tikzsetexternalprefix{external_figs/}
\usepackage{contour}

\graphicspath{{./graphics/}}
\DeclareMathOperator*{\argmin}{arg\,min}
\theoremstyle{definition}

\theoremstyle{definition}

\theoremstyle{definition}

\newtheorem{lemma}{Lemma}
\newcommand{\Sc}{S_{\mathcal{C}_i}}
\newcommand{\Scb}{S_{\overline{\mathcal{C}}_i}}

\newcommand*{\la}{\big \lvert}%
\newcommand*{\ra}{\big \rvert}%
\DeclarePairedDelimiter{\ceil}{\lceil}{\rceil}
\definecolor{blue33D}{RGB}{0, 0, 175}
\newenvironment{reply}{}{}
%
\newcommand{\legendfont}{\fontsize{8pt}{8pt}\selectfont}
\newcommand{\legendfontt}{\fontsize{9pt}{9pt}\selectfont}
\newcommand{\labelfont}{\fontsize{10pt}{10pt}\selectfont}

\definecolor{c1}{RGB}{215, 196, 255}
\definecolor{c2}{RGB}{89,51,84}
\definecolor{cgrid}{RGB}{215,215,215}
\definecolor{cgrid2}{RGB}{215,215,215}
\definecolor{dark01}{RGB}{50, 50, 50}
\definecolor{dark011}{RGB}{25, 25, 25}
\definecolor{greenC1}{RGB}{13, 128, 82}
\definecolor{redC1}{RGB}{209, 6, 60}
\definecolor{blueC1}{RGB}{6, 67, 209}
\definecolor{c1}{RGB}{209, 6, 60}
\definecolor{c2}{RGB}{6, 67, 209}
\definecolor{c3}{RGB}{13, 128, 82}
\definecolor{c4}{RGB}{126, 47, 142}
\definecolor{c5}{RGB}{222, 125, 0}
\definecolor{c6}{RGB}{20, 43, 140}
\definecolor{c7}{RGB}{162, 20, 47}

\tikzset{
	gridc/.style= {dotted, cgrid2},
	linew/.style= {line width=0.65pt},
	linew2/.style= {line width=0.75pt},
	marksz/.style= {mark options={scale=1, fill opacity=1, solid, line width=0.5}},
	marksz2/.style= {mark options={scale=1.4, fill opacity=1, solid, rotate=180, line width=0.5}},
	marksz3/.style= {mark options={scale=1, fill=white, fill opacity=0.5, solid, line width=0.5}},
	errormarkSty/.style = {rotate=90, mark size=3pt, solid, line width=0.5pt},
	invisible/.style={opacity=0},
	visible on/.style={alt={#1{}{invisible}}},
	alt/.code args={<#1>#2#3}{%
		\alt<#1>{\pgfkeysalso{#2}}{\pgfkeysalso{#3}} 
	},
}

\pgfplotsset{
	axisSetup/.style= {axis x line=bottom, axis y line=left, tick align=inside, axis line style={-, line width=1.25pt, color=dark01}},
	short Legend0/.style={%
		legend image code/.code={
			\draw[##1,line width=3pt] plot coordinates {(0pt,0pt) (6pt,0pt)};
		}
	},
	short Legend1/.style={%
		legend image code/.code={
			\draw[##1,linew] plot coordinates {(0pt,0pt) (9pt,0pt)};
			\draw[##1,linew, dashed] plot coordinates {(0pt,2pt) (9pt,2pt)};
			\draw[##1,mark=square, marksz, linew] plot coordinates {(-4pt,1pt)};
		}
	},
	short Legend2/.style={%
		legend image code/.code={
			\draw[##1,linew] plot coordinates {(0pt,0pt) (9pt,0pt)};
			\draw[##1,linew, dashed] plot coordinates {(0pt,2pt) (9pt,2pt)};
			\draw[##1,mark=o, marksz, linew] plot coordinates {(-4pt,1pt)};
		}
	},
	short Legend3/.style={%
		legend image code/.code={
			\draw[##1,linew] plot coordinates {(0pt,0pt) (9pt,0pt)};
			\draw[##1,linew, dashed] plot coordinates {(0pt,2pt) (9pt,2pt)};
			\draw[##1,mark=triangle, linew, marksz2] plot coordinates {(-4pt,1pt)};
		}
	},
	ylabel right/.style={
		after end axis/.append code={
			\node [rotate=90, anchor=north] at (rel axis cs:1,0.5) {#1};
		}   
	}
}

\newif\ifshowtikz
\showtikztrue

\let\oldtikzpicture\tikzpicture
\let\oldendtikzpicture\endtikzpicture

\renewenvironment{tikzpicture}{%
	\ifshowtikz\expandafter\oldtikzpicture%
	\else\comment%
	\fi
}{%
	\ifshowtikz\oldendtikzpicture%
	\else\endcomment%
	\fi
}

\begin{document}

\author{Bashar Tahir, Stefan Schwarz, and Markus Rupp \\
	
	\thanks{Bashar Tahir and Stefan Schwarz are with the Christian Doppler Laboratory for Dependable Wireless Connectivity for the Society in Motion. The financial support by the Austrian Federal Ministry for Digital and Economic Affairs and the National Foundation for Research, Technology and Development is gratefully acknowledged.}

	Institute of Telecommunications, Technische Universit\"{a}t Wien, Vienna, Austria \\
}

\title{Outage Analysis of Uplink IRS-Assisted \\ NOMA under Elements Splitting}

\maketitle
\begin{abstract}
In this paper, we investigate the outage performance of an \ac{IRS}-assisted \ac{NOMA} uplink, in which a group of the surface reflecting elements are configured to boost the signal of one of the \acp{UE}, while the remaining elements are used to boost the other \ac{UE}. By approximating the received powers as Gamma random variables, tractable expressions for the outage probability under \ac{NOMA} interference cancellation are obtained. We evaluate the outage over different splits of the elements and varying pathloss differences between the two \acp{UE}. The analysis shows that for small pathloss differences, the split should be chosen such that most of the \ac{IRS} elements are configured to boost the stronger \ac{UE}, while for large pathloss differences, it is more beneficial to boost the weaker \ac{UE}. Finally, we investigate a robust selection of the elements' split under the criterion of minimizing the maximum outage between the two \acp{UE}.
\end{abstract}


\IEEEpeerreviewmaketitle

\section{Introduction}
\Acfp{IRS}, also known as \acfp{RIS}, have been identified as a promising technology to enhance the spectral and energy efficiency of \ac{B5G} wireless systems \cite{Renzo19, Wu19}. Those surfaces consist of a large number of low-cost reconfigurable elements whose electromagnetic response to impinging/incident waves can be modified. Phase adjustment of the waves across the different elements allows the surface to perform passive beamforming, which is beneficial in the context of extending the coverage area, focusing the energy towards a certain \acf{UE}, reducing interference, and more \cite{Wu20, Basar19}. 

Another technology that has gained interest over the past couple of years is \acf{NOMA}. With \ac{NOMA}, the \acp{UE} can contest the same time-frequency resources in a non-orthogonal manner, which may lead to a higher spectral efficiency, lower access latency, improved user fairness, etc \cite{Dai18, Ding17}. The combination of \ac{NOMA} with \acp{IRS} has gained attention recently, with many works showing potential gains in terms of energy efficiency, sum-rate, and outage performance \cite{Ding20a, Fu19, Yang20, Cheng20, Ding20b}. An important aspect is how to configure the elements of the surface. In some works, the phase shifts across the different elements are set jointly according to a certain design criterion \cite{Yang20, Fu19, Zeng20}, such as maximizing the sum-rate. Other works consider the case where the entire surface is used to boost one of the \ac{NOMA} \acp{UE} \cite{Ding20a, Cheng20}.

We consider in this paper an \ac{IRS}-assisted \ac{NOMA} uplink, in which the elements of the \ac{IRS} are split between the two \ac{NOMA} \acp{UE}, i.e., part of the surface is used to coherently combine the signal of the first \ac{UE}, while the other part is used to coherently combine the signal of the second one. We assume the communication to take place primarily through the surface, e.g., due to blockage of the direct links to the \ac{BS}. All the links are assumed to undergo Nakagami-$m$ fading, allowing to flexibly capture \acf{LOS} and \acf{NLOS} propagation conditions. We analyze the outage probability under \ac{NOMA} \acf{IC} for different splits of the elements and pathloss differences between the \acp{UE}. To obtain tractable expressions, we approximate the received powers of the \acp{UE} as Gamma \acfp{RV} in a fashion similar to \cite{Tahir20cPre, Lyu20} via second-order moments matching. We finally analyze an outage-robust selection of the elements' split between the two \acp{UE} and discuss its impact on the performance limits of such a system. 

\section{System Model}
\begin{figure}
	\centering
	\begin{center}
		\begin{tikzpicture}
		\node[,] (image) at (0,0) {\includegraphics[width=0.8\linewidth]{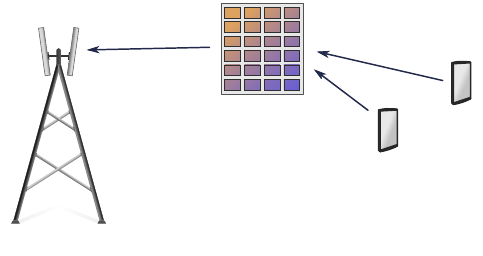}};
		\node[,] at (-2.68, -1.6) {\legendfont BS};
		\node[,] at (0.35, 0.27) {\legendfont IRS};
		\node[,] at (2.2, -0.55) {\legendfont UE1};
		\node[,] at (3.25, 0.14) {\legendfont UE2};
		\end{tikzpicture}
		\vspace{-3mm}
	\end{center}
	\caption{The uplink \ac{IRS}-assisted \ac{NOMA} setup.}
\label{fig:00}
\vspace{-2mm}
\end{figure}
We consider a single-antenna two-\ac{UE} \ac{NOMA} uplink assisted by an $N$-elements \ac{IRS}, as shown in \Cref{fig:00}. At the \ac{BS}, the received signal is given by
\begin{align}\label{eq:1}
r = \sum_{i = 1}^{2} \sqrt{\ell_{\mathrm{BS}} \ell_{h_i}P_i}\, \mathbf{h}_{\mathrm{BS}}^T\mathbf{\Phi}\,\mathbf{h}_i \,x_i + w,
\end{align}
where $\mathbf{h}_{\mathrm{BS}} \in \mathbb{C}^{N}$, and $\mathbf{h}_i \in \mathbb{C}^{N}$ are the small-scale fading coefficients of the \ac{BS}-\ac{IRS} and \ac{IRS}-\ac{UE} links, respectively. The parameters $\ell_{\mathrm{BS}}$ and $\ell_{h_i}$ are the corresponding pathlosses, $P_i$ and $x_i$ are the transmit power and signal of the $i^{\mathrm{th}}$-\ac{UE}, and $w$ is the zero-mean Gaussian noise with power $P_w$. The phase-shift matrix $\mathbf{\Phi} \in \mathbb{C}^{N \times N}$ is defined as
\begin{align}\label{eq:2}
	\mathbf{\Phi} = \mathrm{diag} \big(e^{j\phi_1}, e^{j\phi_2}, \dots, e^{j\phi_N}\big),
\end{align}
where $\phi_n$ is the phase-shift applied at the $n^{\textrm{th}}$-element of the \ac{IRS}. Note that the \ac{IRS} term can be written equivalently as
\begin{align}\label{eq:3}
	\mathbf{h}_{\mathrm{BS}}^T\mathbf{\Phi}\,\mathbf{h}_i = \sum_{n = 1}^{N} e^{j\phi_n} \mathbf{h}_{\mathrm{BS}, n}  \,  \mathbf{h}_{i, n},
\end{align}
where $\mathbf{h}_{\mathrm{BS}, n}$ and $\mathbf{h}_{i, n}$ are the $n^{\textrm{th}}$-elements of $\mathbf{h}_{\mathrm{BS}}$ and  $\mathbf{h}_{i}$, respectively. To be flexible in terms of modeling \ac{LOS} and \ac{NLOS} propagation conditions, the links are assumed to undergo Nakagami-$m$ fading, i.e.,
\begin{align}
\begin{split}
	|\mathbf{h}_{\mathrm{BS}, n}| &\sim \mathrm{Nakagami} (m_{\mathrm{BS}}, 1), \\
	|\mathbf{h}_{i, n}| &\sim \mathrm{Nakagami} (m_{h_i}, 1),
\end{split}	
\end{align}
where $m_{\mathrm{BS}}$ and $m_{h_i}$ are the corresponding distribution parameters.

In this work, we consider the case where the elements of the \ac{IRS} is split between the two \acp{UE}, i.e., a total of $N_1$ elements are configured to coherently combine the signal of \ac{UE}1, while $N_2 = N - N_1$ elements are configured for \ac{UE}2. The phases are then set to
\begin{align}\label{eq:6}
\phi_{n} = - \arg{\big(\mathbf{h}_{\mathrm{BS}, n} \, \mathbf{h}_{i, n}\big)}, \quad n \in \mathcal{C}_i,
\end{align}
where $\mathcal{C}_i$ is the set of elements that are configured to boost the $i^{\textrm{th}}$-\ac{UE}. Therefore, the \ac{IRS} term can be written as
\begin{align}\label{eq:split}
	\mathbf{h}_{\mathrm{BS}}^T\mathbf{\Phi}\,\mathbf{h}_i = \underbrace{\sum_{n \in \mathcal{C}_i} \la \mathbf{h}_{\mathrm{BS}, n} \ra \, \la \mathbf{h}_{i, n} \ra}_{\substack{\textrm{coherently combined} \\ \textrm{part of the } i^{\textrm{th}}\textrm{-UE}}} + \underbrace{\sum_{n \in \overline{\mathcal{C}}_i} e^{j\phi_n} \mathbf{h}_{\mathrm{BS}, n} \, \mathbf{h}_{i, n}}_{\substack{\textrm{randomly combined} \\ \textrm{part of the } i^{\textrm{th}}\textrm{-UE}}},
\end{align}
where the complement set $\overline{\mathcal{C}}_i$ is the set of elements that are not configured for the $i^{\textrm{th}}$-\ac{UE}, and thus will result in a random combining of its phases. Note that $\mathcal{C}_1 = \overline{\mathcal{C}}_2$, i.e., the part that will coherently combine the signal of one of the \acp{UE}, will randomly combine the signal of the other one. This is under the assumption that the channels of the two \acp{UE} are  uncorrelated. 

Since we apply the Gamma moment matching often in this work, we state how it is performed in the following lemma.

\begin{lemma}\label{lemma:1}
	Let $X$ be a non-negative \ac{RV} with first and second moments given by $\mu_X = \mathbb{E}\{X\}$ and $\mu_X^{(2)} = \mathbb{E}\{X^2\}$, respectively. The Gamma \ac{RV} $Y \sim \Gamma(k, \theta)$ with the same first and second moments has shape $k$ and scale $\theta$ parameters
	\begin{align*}
	k = \frac{\mu_X^2}{\mu_X^{(2)} - \mu_X^2}, \quad \quad \theta = \frac{\mu_X^{(2)} - \mu_X^2}{\mu_X}.
	\end{align*}		
\end{lemma}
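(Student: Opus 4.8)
The plan is to express the first two moments of $Y \sim \Gamma(k,\theta)$ in terms of the shape $k$ and scale $\theta$, equate them to the given $\mu_X$ and $\mu_X^{(2)}$, and solve the resulting $2\times 2$ system for $k$ and $\theta$. First I would recall the standard moment formulas for the shape--scale Gamma parameterization: $\mathbb{E}\{Y\} = k\theta$ and $\mathrm{Var}\{Y\} = k\theta^2$, from which $\mathbb{E}\{Y^2\} = \mathrm{Var}\{Y\} + (\mathbb{E}\{Y\})^2 = k\theta^2 + k^2\theta^2$.

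Next, I would impose the matching conditions $k\theta = \mu_X$ and $k\theta^2 + k^2\theta^2 = \mu_X^{(2)}$. The cleanest route is to match variances rather than raw second moments: since $\mathrm{Var}\{Y\} = \mu_X^{(2)} - \mu_X^2$, the second condition reads $k\theta^2 = \mu_X^{(2)} - \mu_X^2$. Dividing this by $k\theta = \mu_X$ immediately isolates the scale, $\theta = (\mu_X^{(2)} - \mu_X^2)/\mu_X$. Substituting back into $k = \mu_X/\theta$ then yields $k = \mu_X^2/(\mu_X^{(2)} - \mu_X^2)$, which are exactly the claimed expressions.

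The step is elementary, so there is no real obstacle beyond careful bookkeeping. The one subtlety worth flagging is the Gamma parameterization: because the claim names a \emph{scale} parameter $\theta$, one must use $\mathrm{Var}\{Y\} = k\theta^2$ rather than the rate-convention variance $k/\theta^2$, as the latter would invert the role of $\theta$ and give reciprocal formulas. For the expressions to define a legitimate Gamma \ac{RV} one also needs $k,\theta > 0$, which holds precisely when $\mu_X^{(2)} > \mu_X^2$, i.e.\ when $X$ has strictly positive variance; this is automatic for any non-degenerate non-negative $X$ and so need not be assumed separately.
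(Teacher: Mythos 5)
Your proof is correct: the paper itself states this lemma without proof (treating it as the standard second-order moment-matching fact), and your derivation via $\mathbb{E}\{Y\} = k\theta$, $\mathrm{Var}\{Y\} = k\theta^2$ and solving the $2\times 2$ system is exactly the elementary computation the paper implicitly relies on. Your remarks on the shape--scale (vs.\ rate) convention and on needing $\mu_X^{(2)} > \mu_X^2$ are apt and, if anything, slightly more careful than the paper's presentation.
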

Additionally, Gamma \acp{RV} satisfy the scaling property, in the sense that if $Y \sim \Gamma(k, \theta)$, then $cY \sim \Gamma(k, c\theta)$. 

\section{Outage Analysis}
Evaluating metrics such as the \ac{NOMA} outage probability requires an access to the statistics of the channel output power, preferably the full distribution. As can be seen in \eqref{eq:split}, that is no easy task. Therefore, we resort to approximations. Our approach here is to approximate the received powers of the \ac{NOMA} \acp{UE} as Gamma \acp{RV}. On the one hand, the Gamma distribution can accurately model the power of many fading distributions, and on the other hand, it allows for tractable expressions when evaluating the outage, as we will see later.

\subsection{Statistics of the Received Power}

Let $Z_i$ be the channel power of the $i^{\textrm{th}}$-\ac{UE}, i.e., 
\begin{align}\label{eq:zi}
	Z_i = \ell_{\mathrm{BS}} \ell_{h_i} \bigg| \sum_{n \in \mathcal{C}_i} \la \mathbf{h}_{\mathrm{BS}, n} \ra \, \la \mathbf{h}_{i, n} \ra + \sum_{n \in \overline{\mathcal{C}}_i} e^{j\phi_n} \mathbf{h}_{\mathrm{BS}, n} \, \mathbf{h}_{i, n}\bigg|^2.
\end{align}
Our goal here is to approximate $Z_i$ as a Gamma \ac{RV} via second-order moments matching, which requires an access to its first two moments. To simplify matters, we first address the statistics of the two sum terms inside.

\begin{lemma} For the two sum terms in \eqref{eq:zi} given by
	\begin{align*}
		S_{\mathcal{C}_i} &= \sum_{n \in \mathcal{C}_i} \la \mathbf{h}_{\mathrm{BS}, n} \ra \, \la \mathbf{h}_{i, n} \ra, \\
		S_{\overline{\mathcal{C}}_i} &= \sum_{n \in \overline{\mathcal{C}}_i} e^{j\phi_n} \mathbf{h}_{\mathrm{BS}, n} \, \mathbf{h}_{i, n},
	\end{align*}
	their distributions are approximated as
	\begin{align*}
		S_{\mathcal{C}_i} & \stackrel{\mathrm{approx}}{\sim}  \Gamma\Bigg(N_i\frac{\mu_i^2}{1 - \mu_i^2}\,, \, \frac{1 - \mu_i^2}{\mu_i} \Bigg), \\
		S_{\overline{\mathcal{C}}_i} & \stackrel{\mathrm{approx}}{\sim}  \mathcal{CN}\big(0, N - N_i\big),
	\end{align*}
	with 
	\begin{align*}
	\mu_i = \mathbb{E}\big\{| \mathbf{h}_{\mathrm{BS}, n} | \, | \mathbf{h}_{i, n} |\big\} =  \frac{\Gamma(m_{\mathrm{BS}} + \frac{1}{2}) \Gamma(m_{h_i} + \frac{1}{2})}{\Gamma(m_{\mathrm{BS}})\Gamma(m_{h_i}) (m_{\mathrm{BS}}\, m_{h_i})^{1/2}},
	\end{align*}
	being the mean of the product of two independent Nakagami \acp{RV}.
\end{lemma}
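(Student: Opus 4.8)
The plan is to treat the two sums separately, since they have quite different structure: $S_{\mathcal{C}_i}$ is a sum of nonnegative real products of fading magnitudes, whereas $S_{\overline{\mathcal{C}}_i}$ is a sum of phase-randomized complex products. For $S_{\mathcal{C}_i}$, the strategy is to apply \Cref{lemma:1} to a single product term and then invoke Gamma additivity. First I would note that $S_{\mathcal{C}_i}$ is a sum of $N_i$ i.i.d. terms, each of the form $|\mathbf{h}_{\mathrm{BS}, n}|\,|\mathbf{h}_{i, n}|$, a product of two independent Nakagami magnitudes. I would compute the first two moments of one such product: by independence, the first moment factors as the product of the two Nakagami means. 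Using the known mean $\Gamma(m + \tfrac{1}{2})/[\Gamma(m)\,m^{1/2}]$ of a Nakagami$(m,1)$ variable, this product is exactly the stated $\mu_i$, which also establishes the last claim of the lemma. The second moment factors as $\mathbb{E}\{|\mathbf{h}_{\mathrm{BS}, n}|^2\}\,\mathbb{E}\{|\mathbf{h}_{i, n}|^2\} = 1$, since the Nakagami spread parameter is unity. Feeding $\mu_X = \mu_i$ and $\mu_X^{(2)} = 1$ into \Cref{lemma:1} gives a single-term Gamma approximation with shape $\mu_i^2/(1-\mu_i^2)$ and scale $(1-\mu_i^2)/\mu_i$.

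To pass from one term to the sum, I would use the fact that independent Gamma variables sharing a common scale parameter add their shape parameters. Since the $N_i$ product terms are i.i.d. and each carries the same scale $(1-\mu_i^2)/\mu_i$, the shape accumulates to $N_i\,\mu_i^2/(1-\mu_i^2)$ while the scale is unchanged, which is precisely the claimed approximation for $S_{\mathcal{C}_i}$.

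For $S_{\overline{\mathcal{C}}_i}$, the key observation is that its elements are phase-aligned for the \emph{other} \ac{UE} (recall $\overline{\mathcal{C}}_i = \mathcal{C}_j$ with $j \neq i$), so $\phi_n = -\arg(\mathbf{h}_{\mathrm{BS}, n}\,\mathbf{h}_{j, n})$ and the residual phase of the $n^{\textrm{th}}$ term is $\arg(\mathbf{h}_{i, n}) - \arg(\mathbf{h}_{j, n})$. Under the stated uncorrelated-channel assumption together with the uniform-phase modeling of the fading, this difference is uniform on $[0, 2\pi)$, so each term is zero-mean with magnitude $|\mathbf{h}_{\mathrm{BS}, n}|\,|\mathbf{h}_{i, n}|$ that is independent of the phase. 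I would then compute the per-term second moment $\mathbb{E}\{|\mathbf{h}_{\mathrm{BS}, n}|^2|\mathbf{h}_{i, n}|^2\} = 1$ and invoke the central limit theorem across the $N - N_i$ independent terms to approximate the sum as a zero-mean circularly-symmetric complex Gaussian with variance $N - N_i$, giving $\mathcal{CN}(0, N - N_i)$.

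The main obstacle is the justification underpinning the second claim: one must argue carefully that cancelling \ac{UE} $j$'s phase leaves \ac{UE} $i$'s term with a phase that is effectively uniform and statistically independent of its magnitude, which rests entirely on the independence and uniform-phase fading model. A secondary caveat is that both results are approximations, obtained by moment matching and by the central limit theorem, rather than exact identities; the most that can be established rigorously is agreement of the matched moments for $S_{\mathcal{C}_i}$ and asymptotic normality of $S_{\overline{\mathcal{C}}_i}$ as $N - N_i$ grows. The accuracy of the Gamma fit for small $N_i$ is therefore not guaranteed analytically and would be supported empirically.
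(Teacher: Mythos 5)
Your proposal is correct and follows essentially the same path as the paper: Gamma moment matching via \Cref{lemma:1} for $S_{\mathcal{C}_i}$ (using the product-of-Nakagami mean $\mu_i$ and unit second moment) and a zero-mean complex Gaussian/\ac{CLT} approximation with per-term power one for $S_{\overline{\mathcal{C}}_i}$. Your variant of matching a single term and then invoking shape-additivity of common-scale Gammas is equivalent to the paper's direct matching of the sum, since means and variances add over the $N_i$ i.i.d. terms, and your explicit phase-cancellation argument merely spells out what the paper leaves implicit.
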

\begin{proof}
	We follow a similar approach as in \cite{Tahir20cPre}. The first term is a sum of $N_i$ i.i.d. positive \acp{RV} that can be well approximated by a Gamma \ac{RV} via moments matching (\Cref{lemma:1}), requiring access only to the mean of the composite channel ($\mu_i$). For the second term, it is a sum of $N - N_i$ complex-valued i.i.d. \acp{RV}, which are approximated by a zero-mean complex Gaussian.
\end{proof}
The channel power now can be compactly written as 
\begin{align}\label{eq:09}
	Z_i = \ell_{\mathrm{BS}} \ell_{h_i} | S_{\mathcal{C}_i} + S_{\overline{\mathcal{C}}_i} |^2,
\end{align}
with the first two moments given by the following lemma.
\begin{lemma}\label{lemma:4}
	The first two moments of the channel power under elements splitting are given by
	\begin{align*}
	\mu_{Z_i}^{\vphantom{(2)}} &= \ell_{\mathrm{BS}} \ell_{h_i} \Big(\mu_{S_{\mathcal{C}_i}}^{(2)} + \mu_{|S_{\overline{\mathcal{C}}_i}|}^{(2)}\Big), \\
	\mu_{Z_i}^{(2)} &= (\ell_{\mathrm{BS}} \ell_{h_i})^2 \Big(\mu_{S_{\mathcal{C}_i}}^{(4)} + \mu_{|S_{\overline{\mathcal{C}}_i}|}^{(4)} +  4\,\mu_{S_{\mathcal{C}_i}}^{(2)}\,\mu_{|S_{\overline{\mathcal{C}}_i}|}^{(2)}\Big),
	\end{align*}
	where
	\begin{align*}
	\mu_{ S_{\mathcal{C}_i}}^{(p)}  &= \frac{\Gamma \Big(N_i \frac{\mu_i^2}{1 - \mu_i^2}  + p \Big) \Big( \frac{1 - \mu_i^2}{\mu_i}\Big)^p }{\Gamma \Big(N_i  \frac{\mu_i^2}{1 - \mu_i^2} \Big)}, \\  
	\mu_{|S_{\overline{\mathcal{C}}_i}|}^{(p)}  &= \Gamma\Big(1 + \frac{p}{2}\Big) {(N - N_i)}^{p/2}.
	\end{align*}
\end{lemma}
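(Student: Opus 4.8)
The plan is to exploit the independence of the two sum terms and reduce everything to the moments of a real Gamma variable and of the magnitude of a complex Gaussian. Because $\mathcal{C}_i$ and $\overline{\mathcal{C}}_i$ are disjoint index sets drawn from uncorrelated channels, $\Sc$ and $\Scb$ are independent; moreover $\Sc$ is real and nonnegative (a Gamma variable) while $\Scb$ is zero-mean complex Gaussian. Writing $\Scb = B_R + j B_I$, circular symmetry of $\mathcal{CN}(0, N - N_i)$ gives $B_R, B_I$ i.i.d. real Gaussian with $\mathbb{E}\{B_R^2\} = \mathbb{E}\{B_I^2\} = \tfrac{1}{2}(N - N_i) = \tfrac{1}{2}\mu_{|\Scb|}^{(2)}$. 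Expanding the squared magnitude, $|\Sc + \Scb|^2 = \Sc^2 + 2\,\Sc\, B_R + |\Scb|^2$, which is the identity I would carry through both moment computations.

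For the first moment I would take the expectation of this expansion. The cross term obeys $\mathbb{E}\{2\,\Sc\,B_R\} = 2\,\mathbb{E}\{\Sc\}\,\mathbb{E}\{B_R\} = 0$ by independence and the zero mean of $B_R$, leaving $\mu_{Z_i} = \ell_{\mathrm{BS}}\ell_{h_i}\big(\mathbb{E}\{\Sc^2\} + \mathbb{E}\{|\Scb|^2\}\big) = \ell_{\mathrm{BS}}\ell_{h_i}\big(\mu_{\Sc}^{(2)} + \mu_{|\Scb|}^{(2)}\big)$, as claimed.

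For the second moment I would square the expansion and expand into six terms. The two terms carrying an odd power of $B_R$, namely $4\,\Sc^3 B_R$ and $4\,\Sc\,B_R |\Scb|^2 = 4\Sc(B_R^3 + B_R B_I^2)$, have zero mean since all odd central Gaussian moments vanish. The surviving terms are $\mathbb{E}\{\Sc^4\} = \mu_{\Sc}^{(4)}$, $\mathbb{E}\{|\Scb|^4\} = \mu_{|\Scb|}^{(4)}$, and the two mixed terms $\mathbb{E}\{4\,\Sc^2 B_R^2\}$ and $\mathbb{E}\{2\,\Sc^2|\Scb|^2\}$. The key bookkeeping step is that the first mixed term equals $4\,\mu_{\Sc}^{(2)}\cdot\tfrac12\mu_{|\Scb|}^{(2)} = 2\,\mu_{\Sc}^{(2)}\mu_{|\Scb|}^{(2)}$ because $B_R$ carries only half of the total variance, while the second equals $2\,\mu_{\Sc}^{(2)}\mu_{|\Scb|}^{(2)}$; their sum reproduces the factor-$4$ coefficient in the stated $\mu_{Z_i}^{(2)}$.

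Finally I would supply the closed forms of the moments. For $\Sc \sim \Gamma(k,\theta)$ the standard identity $\mathbb{E}\{\Sc^p\} = \theta^p\,\Gamma(k + p)/\Gamma(k)$ with $k = N_i \mu_i^2/(1-\mu_i^2)$ and $\theta = (1-\mu_i^2)/\mu_i$ yields $\mu_{\Sc}^{(p)}$. For the Gaussian magnitude, $|\Scb|^2$ is exponential with mean $N - N_i$, so $|\Scb|$ is Rayleigh and $\mathbb{E}\{|\Scb|^p\} = \Gamma(1 + p/2)(N - N_i)^{p/2}$, giving $\mu_{|\Scb|}^{(p)}$. I expect no deep obstacle here; the only place to be careful is the variance-splitting factor $\tfrac12$ for $B_R$, since mishandling it would corrupt the coefficient of the mixed term and hence the whole fourth-moment expression.
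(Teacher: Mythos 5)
Your proposal is correct and follows essentially the same route as the paper's proof: the same expansion $|\Sc + \Scb|^2 = \Sc^2 + 2\Sc\Re\{\Scb\} + |\Scb|^2$, the same elimination of odd cross terms by independence and symmetry, the same equal splitting of the variance of $\Scb$ between real and imaginary parts (yielding the factor $2 + 2 = 4$ on the mixed term), and the same substitution of Gamma and Rayleigh moments at the end. The only cosmetic difference is that you justify the vanishing cross terms via circular symmetry of the Gaussian approximation, whereas the paper invokes a zero-mean-symmetric-phase assumption directly.
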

\begin{proof}
	Expanding \eqref{eq:09}, we have
	\begin{align*}
		\mu_{Z_i}^{\vphantom{(2)}} = \ell_{\mathrm{BS}} \ell_{h_i}\mathbb{E}\big\{S_{\mathcal{C}_i}^2 +  |S_{\overline{\mathcal{C}}_i}|^2 + 2 S_{\mathcal{C}_i} \Re\{S_{\overline{\mathcal{C}}_i}\} \big\}.
	\end{align*}
	We make the assumption here that the phase of $S_{\overline{\mathcal{C}}_i}$ is zero-mean symmetric. This is valid since it results from an out-of-phase summation of the terms. Therefore, its real part will be zero-mean as well, leading to $\mathbb{E}\big\{ S_{\mathcal{C}_i} \Re\{S_{\overline{\mathcal{C}}_i}\} \big\} = \mathbb{E}\big\{ S_{\mathcal{C}_i} \big\} \mathbb{E}\big\{\Re\{S_{\overline{\mathcal{C}}_i}\}\big\} = 0$, giving the final result.
	We proceed in a similar manner for $\mu_{Z_i}^{(2)}$. After the expansion we get
	\begin{align*}
		\begin{split}
		\mu_{Z_i}^{(2)} = (\ell_{\mathrm{BS}} \ell_{h_i})^2& \mathbb{E}\big\{ \Sc^4 + |\Scb|^4 + 2\Sc^2|\Scb|^2 + 4\Sc^3\Re\{\Scb\} \\
		&\quad~ + 4\Sc|\Scb|^2\Re\{\Scb\} + 4 \Sc^2\Re\{\Scb\}^2 \big\}. \\
		\end{split}	
	\end{align*}
	Following the assumptions of independence and zero-mean symmetry, we have
	\begin{align*}
		\mathbb{E}\big\{\Sc^3\Re\{\Scb\} \big\} &= \mathbb{E}\big\{\Sc^3\big\} \mathbb{E}\big\{\Re\{\Scb\}\big\} = 0,
	\end{align*}
	and
	\begin{align*}
	\mathbb{E}\big\{\Sc|\Scb|^2\Re\{\Scb\} \big\} &= \mathbb{E}\big\{\Sc\big\}\mathbb{E}\big\{ |\Scb|^2\Re\{\Scb\} \big\} \\
	&= \mathbb{E}\big\{\Sc\big\}\mathbb{E}\big\{\Re\{\Scb\}^3  \\ 
	&\quad\quad\quad\quad\quad+ \Im\{\Scb\}\Re\{\Scb\} \big\} \\
	&= 0,
	\end{align*}
	where the final result follows from the fact that the third moment is zero as well (due to symmetry), and independence between the real and imaginary parts.
	For the last term, we assume that the power is split equally across the real and imaginary parts, and therefore
	\begin{align*}
		\mathbb{E}\big\{\Sc^2\Re\{\Scb\}^2\big\} = \mathbb{E}\big\{\Sc^2\big\} \mathbb{E}\big\{|\Scb|^2\big\}/2.
	\end{align*}
	We get the final results by collecting the terms back and substituting the moments of Gamma and Rayleigh (magnitude of Gaussian) \acp{RV}.
\end{proof}
After scaling with the transmit power, the received power of the $i^{\textrm{th}}$-\ac{UE} is given by
	\begin{align}
	Z_iP_i \stackrel{\mathrm{approx}}{\sim}  \Gamma\big(k_i, P_i\theta_i\big),
	\end{align}
where $k_i$ and $\theta_i$ are the Gamma parameters matched to the moments in \Cref{lemma:4}.

\subsection{Outage Probability under Interference Cancellation}
Before applying \ac{IC}, the \ac{SINR} outage of the $i^{\textrm{th}}$-\ac{UE} under the presence of interference from the $j^{\textrm{th}}$-\ac{UE} is defined as
\begin{align}\label{eq:15}
p_{\mathrm{out}}^{(i)} = \mathbb{P}\bigg\{ \frac{Z_i P_i}{Z_j P_j + P_w } \leq \epsilon \bigg\},
\end{align}
where $\epsilon$ is the outage threshold. Under the Gamma approximation, this can be calculated as \cite{Tahir20cPre}
\begin{align*}
\begin{split}
p_{\mathrm{out}}^{(i)}  \approx & ~ I\bigg(\frac{\epsilon \hat{\theta}_j}{\hat{\theta}_i + \epsilon \hat{\theta}_j};\,\hat{k}_i,\,\hat{k}_j \bigg),
\end{split}
\end{align*}
where $I(.;.,.)$ is the regularized incomplete beta function, and
\begin{align}
\begin{split}
\hat{k}_i = k_i, \quad \quad &\hat{\theta}_i = \theta_iP_i,\\
\hat{k}_j = \frac{\big(k_j\theta_jP_j + P_w\big)^2}{k_j (\theta_jP_j)^2}, \quad \quad &\hat{\theta}_j = \frac{k_j (\theta_jP_j)^2}{k_j\theta_jP_j + P_w}.
\end{split}
\end{align}
If the interference is later removed via \ac{IC}, only the noise remains; we define the \ac{SNR} outage as
\begin{align}\label{eq:16}
p_{\mathrm{out,\,SNR}}^{(i)} = \mathbb{P}\bigg\{ \frac{Z_i P_i}{P_w } \leq \epsilon \bigg\},
\end{align}
which is simply the \ac{CDF} of $Z_i$ (Gamma \ac{RV}) evaluated at $\epsilon P_w / P_i$. We consider here a parallel \ac{IC} scheme, in which \ac{UE}1, \ac{UE}2, or both can be detected correctly at the first iteration if they are above the outage threshold. Whatever remains could be detected in the next iteration after \ac{IC} in the presence of noise only. Following the path of successful detection for the $i^{\textrm{th}}$-\ac{UE}, its outage probability under \ac{IC} is given by  \cite{Tahir20cPre}

\begin{align}\label{eq:1333}
p_{\mathrm{out,\,IC}}^{(i)} \approx 1 - \min \big(p_{\mathrm{succ}}^{(i)} + p_{\mathrm{succ}}^{(j)} \, p_{\mathrm{succ,\,SNR}}^{(i)} \,,\, p_{\mathrm{succ,\,SNR}}^{(i)} \big),
\end{align}
where $p_{\mathrm{succ}}^{(i)} = 1 - p_{\mathrm{out}}^{(i)}$ and $p_{\mathrm{succ,\,SNR}}^{(i)} = 1 - p_{\mathrm{out,\,SNR}}^{(i)}$ are the corresponding success probabilities. 

\section{Analysis of an Example Scenario}
\begin{reply}We consider a scenario where the communication between the \ac{NOMA} \acp{UE} and the \ac{BS} takes place through a 32-elements \ac{IRS}, and evaluate the outage performance using \eqref{eq:1333}\end{reply}. We assume the \ac{IRS} to have a strong \ac{LOS} connection to the \ac{BS}, while the \acp{UE} have moderate \ac{LOS} to the \ac{IRS}, with \ac{UE}1 having a stronger \ac{LOS} than \ac{UE}2. This is set by adjusting the corresponding Nakagami $m$ parameters. The pathloss of \ac{UE}1 is fixed to $-70$\,dB, while for \ac{UE}2, it varies. \begin{reply}
Without loss of generality, we assume that both \acp{UE} are transmitting with the same power, i.e., $P_1 = P_2$. In practice, the \acp{UE} might transmit with different powers; however, that does not affect the validity of our analysis here. It holds for any choice of $P_1$ and $P_2$. The simulation parameters are summarized in \Cref{table:1}. \end{reply}

\begin{table}
	\begin{center}
		\fontsize{8pt}{10pt}\selectfont
		\begin{tabularx}{0.8\linewidth}{l |l}
			\hline
			\textbf{Parameter} & \textbf{Value} \\ \specialrule{1pt}{0pt}{1pt}
			\#IRS elements & $N = 32$ \\
			Transmit powers & $P_1 = P_2 = $ $30$\,dBm \\
			Nakagami parameters & \parbox[t]{5cm}{$m_{\mathrm{BS}} = 6$\\
				$m_{h_1} = 3$, $m_{h_2} = 1.5$\vspace{1mm}} \\ 
			Pathlosses & \parbox[t]{5cm}{$\ell_{\mathrm{BS}} = -65$\,dB	\\
				$\ell_{h_1} = -70$\,dB, $\ell_{h_2}$ is variable \vspace{1mm}}\\
			Noise power & $P_w = -110$\,dBm \\
			\hline
		\end{tabularx}
	\end{center}
	\vspace{-2mm}
	\caption{Scenario parameters.}
	\label{table:1}
	\vspace{-5mm}
\end{table}
\subsection{Impact of Elements Splitting}
We define the split factor $\alpha$ as the percentage of elements that are allocated for the coherent combining of the signal of \ac{UE}1. Given that $N_1$ elements are allocated to \ac{UE}1, the split factor then is defined as
\begin{align}
	\alpha = N_1 / N,
\end{align}
and thus the number of elements allocated for \ac{UE}2 is
\begin{align}
	N_2 = N - \ceil*{\alpha N}.
\end{align}
When $\alpha =1$, all the elements are allocated to \ac{UE}1, while for $\alpha = 0$, all the elements are allocated to \ac{UE}2, etc.

We investigate the outage performance over the split factor $\alpha$ for different outage thresholds. \Cref{fig:01} shows the performance when the \acp{UE} have the same pathloss, and therefore the average power gap between them is $0$\,dB (ignoring the surface processing). In that case, the outage probability for both \acp{UE} is minimized, if most of the elements are configured to boost one of the \acp{UE}. The reason for this is, when the split is close to $50\%$, then it is more likely that both \acp{UE} will be received with similar strength at the \ac{BS}. This, in turn, makes the \ac{IC} more difficult, since the \acp{UE} would suffer strong interference from each other. Therefore, when the pathloss difference between the two \acp{UE} is small, it makes sense to focus on boosting one of the \acp{UE}, such that the power gap between them increases, allowing the stronger \ac{UE} to be detected correctly at the first \ac{IC} iteration with high probability.

\Cref{fig:02} and \Cref{fig:03} show the outage performance when the pathloss of \ac{UE}2 is $5$\,dB and $10$\,dB higher than \ac{UE}1, respectively. We observe that as the gap increases, and at low outage thresholds, the split moves towards boosting \ac{UE}2. In this case, the two \acp{UE} have a natural power gap due to the pathloss difference, and therefore the \ac{IRS} can be used to enhance the performance of the weaker user (\ac{UE}2). It can also be observed that as the gap increases, better performance is achieved for both \ac{UE}s. This indicates that when it comes to \ac{NOMA} user pairing, the \ac{BS} should avoid pairing users with similar pathlosses. However, the weak \ac{UE} should be strong enough such that after the combining at the surface, it is able to overcome the noise at the \ac{BS} receiver. Regarding the accuracy of the analysis, we observe that the approximations hold well for the strong \ac{UE}. As for the weak \ac{UE}, and at low outage thresholds, a relatively large gap exists between analysis and simulation for some values of the split factor, suggesting that the Gamma approximation does not hold well under such splitting conditions.

\begin{figure}
\captionsetup{textfont={normalsize}, labelfont={normalsize}, farskip=15pt, captionskip=2pt}

\subfloat[Equal pathloss case.]{
	\resizebox{1\linewidth}{!}{%
		\pgfplotsset{width=250pt, height=200pt, compat = 1.9}
		\begin{tikzpicture}
	\begin{semilogyaxis}[
	xlabel={Split factor $\alpha$},
	ylabel={Outage probability},
	label style={font=\labelfont},
	ylabel shift = -1mm,
	ylabel right ={~},
	ymin=0.0001, ymax=1,
	xmin=0, xmax=1,
	ytick pos=left,
	axisSetup,
	ymajorgrids=true,
	xmajorgrids=true,
	yminorgrids=true,
	xminorgrids=true,
	major x grid style={solid, cgrid},
	major y grid style={solid, cgrid},
	minor x grid style={solid, cgrid},
	minor y grid style={gridc},
	legend style={font=\legendfont, name=legendNode, at={(0.375,0.445)},anchor=west},
	legend cell align=left,
	]
	\foreach \i/\c in {1/greenC1, 5/redC1, 9/blueC1}{
		\edef\temp{\noexpand \addplot [linew, color=\c, mark=none, forget plot] table [x=x\i, y=y\i, col sep=comma] {graphics/results/fig_0dB.csv};}
		\temp
	};
	\foreach \i/\c in {2/greenC1, 6/redC1, 10/blueC1}{
		\edef\temp{\noexpand \addplot [linew, color=\c, mark=none, dashed, dash pattern=on 5pt off 5pt, forget plot] table [x=x\i, y=y\i, col sep=comma] {graphics/results/fig_0dB.csv};}
		\temp
	};
	\foreach \i/\c/\m in {3/greenC1/square*, 4/greenC1/square, 7/redC1/*, 8/redC1/o}{
		\edef\temp{\noexpand \addplot [error bars/.cd, y dir=both, y explicit, error mark options={errormarkSty}] [each nth point={2},linew, mark=\m, only marks, marksz, color=\c, forget plot] table [x=x\i, y=y\i, y error minus=cl\i, y error plus=ch\i, col sep=comma] {graphics/results/fig_0dB.csv};}
		\temp
	};
	\foreach \i/\c/\m in {11/blueC1/triangle*, 12/blueC1/triangle}{
		\edef\temp{\noexpand \addplot [error bars/.cd, y dir=both, y explicit, error mark options={errormarkSty}] [each nth point={2},linew, mark=\m, only marks, marksz2, color=\c, forget plot] table [x=x\i, y=y\i, y error minus=cl\i, y error plus=ch\i, col sep=comma] {graphics/results/fig_0dB.csv};}
		\temp
	};
	\addlegendimage{short Legend0, color=blueC1};
	\addlegendentry{$\epsilon = 10$\,dB \hspace*{-4pt}};
	\addlegendimage{short Legend0, color=redC1};
	\addlegendentry{$\epsilon = 5$\,dB \hspace*{-4pt}};
	\addlegendimage{short Legend0, color=greenC1};
	\addlegendentry{$\epsilon = 1$\,dB \hspace*{-4pt}};
	\addplot[mark=none, color=black, linew] coordinates {(0,0) (0,1)};
	\label{pgfr1}
	\addplot[mark=none, color=black, dashed, dash pattern=on 5pt off 5pt, dash phase=-0.5pt, linew] coordinates {(0,0) (0,1)};
	\label{pgfr2}
	\addplot[mark=square, only marks, color=black, marksz, linew] coordinates {(0,0) (0,2)};
	\label{pgfr3}
	\addplot[mark=o, only marks, color=black, marksz, linew] coordinates {(0,0) (0,2)};
	\label{pgfr4}
	\addplot[mark=triangle, only marks, color=black, marksz2, linew] coordinates {(0,0) (0,2)};
	\label{pgfr5}
	\addplot[mark=square*, only marks, color=black, marksz, linew] coordinates {(0,0) (0,2)};
	\label{pgfr32}
	\addplot[mark=*, only marks, color=black, marksz, linew] coordinates {(0,0) (0,2)};
	\label{pgfr42}
	\addplot[mark=triangle*, only marks, color=black, marksz2, linew] coordinates {(0,0) (0,2)};
	\label{pgfr52}
	%
	\end{semilogyaxis}
	\node [draw, fill=white, below=2pt of legendNode.south, anchor=north](n1) {\shortstack[l]{
			\legendfont \ref*{pgfr1}  UE1, analysis \\  
			\,\legendfont \ref*{pgfr32}\,\ref*{pgfr42}\,\ref*{pgfr52}~UE1, simulation \\  
			\legendfont \ref*{pgfr2}  UE2, analysis \\  
			\,\legendfont \ref*{pgfr3}\,\ref*{pgfr4}\,\ref*{pgfr5}~UE2, simulation
	}};
\end{tikzpicture}
	}
	\label{fig:01}
}\
\subfloat[\ac{UE}2 $5$\,dB weaker.]{
	\resizebox{1\linewidth}{!}{%
		\pgfplotsset{width=250pt, height=200pt, compat = 1.9}
		\begin{tikzpicture}
	\begin{semilogyaxis}[
	xlabel={Split factor $\alpha$},
	ylabel={Outage probability},
	label style={font=\labelfont},
	ylabel shift = -1mm,
	ylabel right ={~},
	ymin=0.0001, ymax=1,
	xmin=0, xmax=1,
	ytick pos=left,
	axisSetup,
	ymajorgrids=true,
	xmajorgrids=true,
	yminorgrids=true,
	xminorgrids=true,
	major x grid style={solid, cgrid},
	major y grid style={solid, cgrid},
	minor x grid style={solid, cgrid},
	minor y grid style={gridc},
	legend style={font=\legendfont, name=legendNode, at={(0.195,0.445)},anchor=west},
	legend cell align=left,
	]
	\foreach \i/\c in {1/greenC1, 5/redC1, 9/blueC1}{
		\edef\temp{\noexpand \addplot [linew, color=\c, mark=none, forget plot] table [x=x\i, y=y\i, col sep=comma] {graphics/results/fig_5dB.csv};}
		\temp
	};
	\foreach \i/\c in {2/greenC1, 6/redC1, 10/blueC1}{
		\edef\temp{\noexpand \addplot [linew, color=\c, mark=none, dashed, dash pattern=on 5pt off 5pt, forget plot] table [x=x\i, y=y\i, col sep=comma] {graphics/results/fig_5dB.csv};}
		\temp
	};
	\foreach \i/\c/\m in {3/greenC1/square*, 4/greenC1/square, 7/redC1/*, 8/redC1/o}{
		\edef\temp{\noexpand \addplot [error bars/.cd, y dir=both, y explicit, error mark options={errormarkSty}] [each nth point={2},linew, mark=\m, only marks, marksz, color=\c, forget plot] table [x=x\i, y=y\i, y error minus=cl\i, y error plus=ch\i, col sep=comma] {graphics/results/fig_5dB.csv};}
		\temp
	};
	\foreach \i/\c/\m in {11/blueC1/triangle*, 12/blueC1/triangle}{
		\edef\temp{\noexpand \addplot [error bars/.cd, y dir=both, y explicit, error mark options={errormarkSty}] [each nth point={2},linew, mark=\m, only marks, marksz2, color=\c, forget plot] table [x=x\i, y=y\i, y error minus=cl\i, y error plus=ch\i, col sep=comma] {graphics/results/fig_5dB.csv};}
		\temp
	};
	\addlegendimage{short Legend0, color=blueC1};
	\addlegendentry{$\epsilon = 10$\,dB \hspace*{-4pt}};
	\addlegendimage{short Legend0, color=redC1};
	\addlegendentry{$\epsilon = 5$\,dB \hspace*{-4pt}};
	\addlegendimage{short Legend0, color=greenC1};
	\addlegendentry{$\epsilon = 1$\,dB \hspace*{-4pt}};
	%
	\end{semilogyaxis}
	\node [draw, fill=white, below=2pt of legendNode.south, anchor=north](n1) {\shortstack[l]{
			\legendfont \ref*{pgfr1}  UE1, analysis \\  
			\,\legendfont \ref*{pgfr32}\,\ref*{pgfr42}\,\ref*{pgfr52}~UE1, simulation \\  
			\legendfont \ref*{pgfr2}  UE2, analysis \\  
			\,\legendfont \ref*{pgfr3}\,\ref*{pgfr4}\,\ref*{pgfr5}~UE2, simulation
	}};
\end{tikzpicture}
	}
	\label{fig:02}
}\
\subfloat[\ac{UE}2 $10$\,dB weaker.]{
	\resizebox{1\linewidth}{!}{%
		\pgfplotsset{width=250pt, height=200pt, compat = 1.9}
		\begin{tikzpicture}
	\begin{semilogyaxis}[
	xlabel={Split factor $\alpha$},
	ylabel={Outage probability},
	label style={font=\labelfont},
	ylabel shift = -1mm,
	ylabel right ={~},
	ymin=0.0001, ymax=1,
	xmin=0, xmax=1,
	ytick pos=left,
	axisSetup,
	ymajorgrids=true,
	xmajorgrids=true,
	yminorgrids=true,
	xminorgrids=true,
	major x grid style={solid, cgrid},
	major y grid style={solid, cgrid},
	minor x grid style={solid, cgrid},
	minor y grid style={gridc},
	legend style={font=\legendfont, name=legendNode, at={(0.09,0.445)},anchor=west},
	legend cell align=left,
	]
	\foreach \i/\c in {1/greenC1, 5/redC1, 9/blueC1}{
		\edef\temp{\noexpand \addplot [linew, color=\c, mark=none, forget plot] table [x=x\i, y=y\i, col sep=comma] {graphics/results/fig_10dB.csv};}
		\temp
	};
	\foreach \i/\c in {2/greenC1, 6/redC1, 10/blueC1}{
		\edef\temp{\noexpand \addplot [linew, color=\c, mark=none, dashed, dash pattern=on 5pt off 5pt, forget plot] table [x=x\i, y=y\i, col sep=comma] {graphics/results/fig_10dB.csv};}
		\temp
	};
	\foreach \i/\c/\m in {3/greenC1/square*, 4/greenC1/square, 7/redC1/*, 8/redC1/o}{
		\edef\temp{\noexpand \addplot [error bars/.cd, y dir=both, y explicit, error mark options={errormarkSty}] [each nth point={2},linew, mark=\m, only marks, marksz, color=\c, forget plot] table [x=x\i, y=y\i, y error minus=cl\i, y error plus=ch\i, col sep=comma] {graphics/results/fig_10dB.csv};}
		\temp
	};
	\foreach \i/\c/\m in {11/blueC1/triangle*, 12/blueC1/triangle}{
		\edef\temp{\noexpand \addplot [error bars/.cd, y dir=both, y explicit, error mark options={errormarkSty}] [each nth point={2},linew, mark=\m, only marks, marksz2, color=\c, forget plot] table [x=x\i, y=y\i, y error minus=cl\i, y error plus=ch\i, col sep=comma] {graphics/results/fig_10dB.csv};}
		\temp
	};
	\addlegendimage{short Legend0, color=blueC1};
	\addlegendentry{$\epsilon = 10$\,dB \hspace*{-4pt}};
	\addlegendimage{short Legend0, color=redC1};
	\addlegendentry{$\epsilon = 5$\,dB \hspace*{-4pt}};
	\addlegendimage{short Legend0, color=greenC1};
	\addlegendentry{$\epsilon = 1$\,dB \hspace*{-4pt}};
	%
	\end{semilogyaxis}
	\node [draw, fill=white, below=2pt of legendNode.south, anchor=north](n1) {\shortstack[l]{
			\legendfont \ref*{pgfr1}  UE1, analysis \\  
			\,\legendfont \ref*{pgfr32}\,\ref*{pgfr42}\,\ref*{pgfr52}~UE1, simulation \\  
			\legendfont \ref*{pgfr2}  UE2, analysis \\  
			\,\legendfont \ref*{pgfr3}\,\ref*{pgfr4}\,\ref*{pgfr5}~UE2, simulation
	}};
\end{tikzpicture}
	}
	\label{fig:03}
}\
\vspace{2mm}
\caption{Analysis of the outage probability vs. the split factor for different outage thresholds.}\label{fig:animals}
\end{figure}
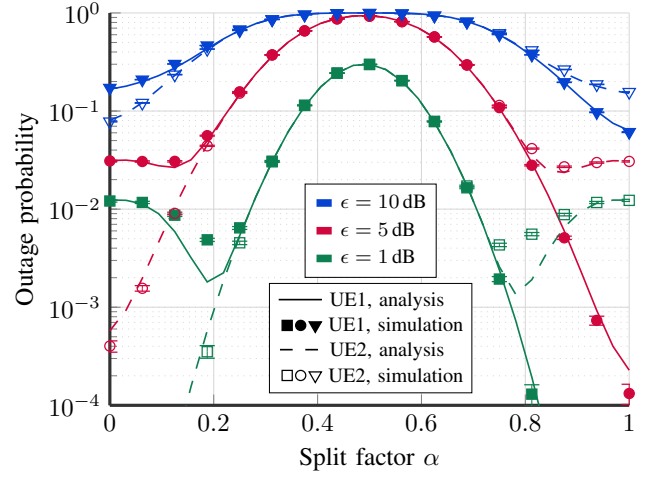
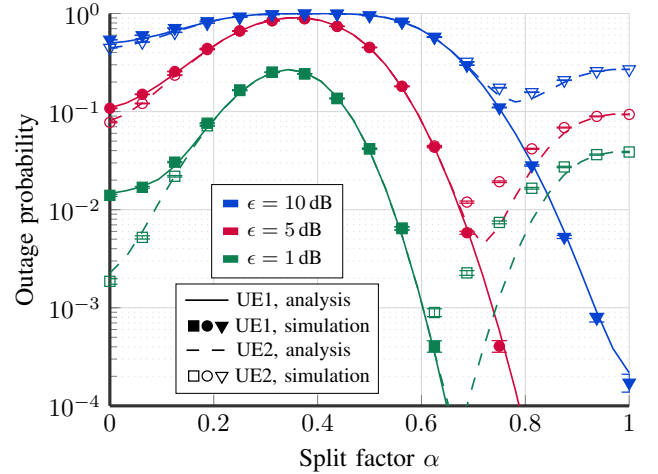
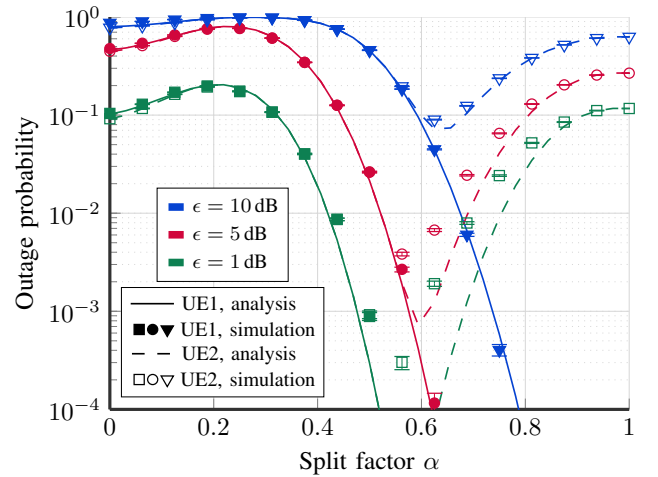

\subsection{Selection of the Split Factor}
We consider the selection of the split factor $\alpha$ from a robust perspective. To ensure that boosting the performance of one \ac{UE} does not come at the cost of degrading the performance of the other one, the split factor is chosen according to
\begin{align}\label{eq:argmin}
	\alpha_{\textrm{robust}} = \argmin_\alpha \, \max_i \,p_{\mathrm{out,\,IC}}^{(i)}.
\end{align}
In Figure \ref{fig:01} to \ref{fig:03}, this would correspond to the points where the \ac{UE}2 outage diverges from \ac{UE}1 and starts saturating (on the right side). However, at high outage thresholds, it can be observed that the outage probability of \ac{UE}2 is very high, no matter what split is applied. For that reason, we introduce the notion of a limiting threshold $\lambda$. If the weak \ac{UE} outage probability is higher than $\lambda$, then the entire \ac{IRS} is used to boost the strong \ac{UE}, as allocating elements to the weak \ac{UE} would be a waste of the surface elements. Assuming \ac{UE}2 is the weaker \ac{UE}, \eqref{eq:argmin} is modified as follows
\begin{align}\label{eq:argmin2}
	\alpha_{\textrm{robust}} =
	\begin{cases} 
		\argmin_\alpha \, \max_i \,p_{\mathrm{out,\,IC}}^{(i)}, &\mbox{if }  p_{\mathrm{out,\,IC}}^{(2)} < \lambda, \\
		1, & \mbox{otherwise}.
	\end{cases}
\end{align}
Although the analysis results shown in the previous subsection is not accurate at low outage thresholds, it can be seen that the robust point occurs almost at the same $\alpha$ for both analysis and simulation. This motivates the use of the analysis as a method to determine $\alpha_{\textrm{robust}}$. Solving \eqref{eq:argmin2} in closed-form is difficult due to the complexity of the functions involved. We thus rely on performing a search for determining the optimal point. Recall that $\alpha = N_1 / N$ with $N_1 = 1,2, \dots, N$ (i.e., the maximum number of possibilities is $N$), meaning that the search can be performed quickly.

\begin{figure}[t]
	\centering
	\resizebox{1\linewidth}{!}{%
				\pgfplotsset{width=250pt, height=200pt, compat = 1.9}
				\begin{tikzpicture}
	\begin{axis}[
	xlabel={Outage threshold [dB]},
	ylabel={$\alpha_{\textrm{robust}}$},
	label style={font=\labelfont},
	ylabel shift = -1mm,
	ylabel right ={~},
	ymin=0, ymax=1,
	xmin=-10, xmax=15,
	ytick pos=left,
	axisSetup,
	ymajorgrids=true,
	xmajorgrids=true,
	yminorgrids=true,
	xminorgrids=true,
	major x grid style={solid, cgrid},
	major y grid style={solid, cgrid},
	minor x grid style={solid, cgrid},
	minor y grid style={gridc},
	legend style={font=\legendfontt, name=legendNode, at={(0.355,0.39)},anchor=west},
	legend cell align=left,
	]
	\foreach \i/\c in {1/c1, 2/c2, 3/c3, 4/c4, 5/c5, 6/c6, 7/c7}{
		\edef\temp{\noexpand \addplot [linew, color=\c, mark=none, forget plot] table [x=x\i, y=y\i, col sep=comma] {graphics/results/fig_robust.csv};}
		\temp
	};
	\foreach \i/\c/\m in {8/c1/o, 9/c2/square, 10/c3/x, 11/c4/diamond, 12/c5/asterisk, 13/c6/triangle, 14/c7/pentagon}{
		\edef\temp{\noexpand \addplot [linew, color=\c, dashed, mark=o, marksz3, dash pattern=on 5pt off 5pt, forget plot] table [x=x\i, y=y\i, col sep=comma] {graphics/results/fig_robust.csv};}
		\temp
	};
	%
	\addplot[mark=o, color=black, dashed, dash pattern=on 4pt off 3pt, dash phase=-0pt, linew, marksz] coordinates {(0,-0.00001) (0,-0.00001)};
	\label{pgfr22}
	\end{axis}
	\draw (rel axis cs: 0.275, 0.7) -- (rel axis cs: 0.375, 0.65) [->, >=stealth, line width=0.7pt, color=dark011] node[pos=-0.4, above=-8pt] {{\fontsize{8pt}{9pt}\selectfont \contourlength{1.5pt} \contour*{white}{0\,dB}}};
	\draw (rel axis cs: 0.45, 0.87) -- (rel axis cs: 0.47, 0.68) [->, >=stealth, line width=0.7pt, color=dark011] node[pos=-0.2, above=-8pt] {{\fontsize{8pt}{9pt}\selectfont \contourlength{1.5pt} \contour*{white}{5\,dB}}};
	\draw (rel axis cs: 0.84, 0.5) -- (rel axis cs: 0.73, 0.61) [->, >=stealth, line width=0.7pt, color=dark011] node[pos=-0.4, below=-8pt] {{\fontsize{8pt}{9pt}\selectfont \contourlength{1.5pt} \contour*{white}{10\,dB}}};
	\draw (rel axis cs: 0.8, 0.36) -- (rel axis cs: 0.73, 0.49) [->, >=stealth, line width=0.7pt, color=dark011] node[pos=-0.4, below=-8pt] {{\fontsize{8pt}{9pt}\selectfont \contourlength{1.5pt} \contour*{white}{15\,dB}}};
	\draw (rel axis cs: 0.72, 0.225) -- (rel axis cs: 0.665, 0.34) [->, >=stealth, line width=0.7pt, color=dark011] node[pos=-0.4, below=-8pt] {{\fontsize{8pt}{9pt}\selectfont \contourlength{1.5pt} \contour*{white}{20\,dB}}};
	\draw (rel axis cs: 0.625, 0.08) -- (rel axis cs: 0.59, 0.2) [->, >=stealth, line width=0.7pt, color=dark011] node[pos=-0.4, below=-8pt] {{\fontsize{8pt}{9pt}\selectfont \contourlength{1.5pt} \contour*{white}{25\,dB}}};
	\draw (rel axis cs: 0.24, 0.11) -- (rel axis cs: 0.29, 0.21) [->, >=stealth, line width=0.7pt, color=dark011] node[pos=-0.6, below=-8pt] {{\fontsize{8pt}{9pt}\selectfont \contourlength{1.5pt} \contour*{white}{30\,dB}}};
	\node [draw, fill=white, anchor=north west] at (rel axis cs: 0.02,0.98) {\shortstack[l]{
			\legendfont \ref*{pgfr1}  Analysis \\    
			\legendfont \ref*{pgfr22}   Simulation
	}};
\end{tikzpicture}
	}
	\caption{Robust selection of the split factor for different pathloss gaps (search via analysis vs. exhaustive simulations).}
	\label{fig:04}
\end{figure}
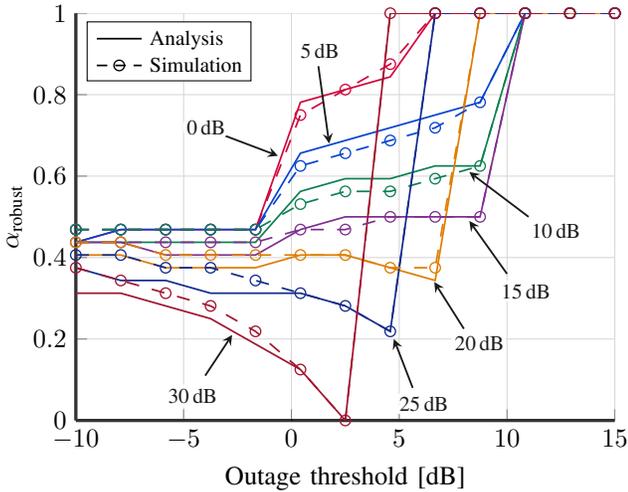

\Cref{fig:04} shows $\alpha_{\textrm{robust}}$ obtained by search through exhaustive simulations vs. analysis for different pathloss gaps between the two \acp{UE}. We observe that at low pathloss gaps, the split is chosen to boost \ac{UE}1 (the stronger \ac{UE}), since it improves the performance of the \ac{NOMA} \ac{IC}. As the pathloss of \ac{UE}2 increases, the robust \ac{IRS} strategy attempts to compensate for the high pathloss by allocating more elements to \ac{UE}2. The sudden jumps to $1$ are due to the limiting threshold, which is set to $\lambda = 10^{-1}$ here. This indicates that at those outage thresholds, the performance of \ac{UE}2 is unacceptable anyway, that it is better to use the entire surface to boost \ac{UE}1. Also, at low outage thresholds, we observe that a split close to $50\%$ seems to be the robust selection, while at high outage thresholds, the selection across the different gaps can vary substantially. Note that at low outage thresholds, the outage probability is very low, which makes the simulation-based selection inaccurate, since it would require a huge number of simulation samples. This is the advantage of the analytical based approach, since it can predict the performance, even at very low outage probabilities.

\section{Conclusion}
In this work, we investigate the outage performance of a two-\acp{UE} \ac{IRS}-assisted \ac{NOMA} uplink, in which the elements of the surface are split between the two \acp{UE}. A Gamma approximation of the \acp{UE} received power is applied, allowing for tractable expressions of the outage under \ac{NOMA} \ac{IC}, while being able to capture \ac{LOS} and \ac{NLOS} propagation conditions. Our results show that when the pathloss difference between the \ac{NOMA} \acp{UE} is small, then better outage performance is achieved if most of the surface elements are configured to boost the stronger \ac{UE}. This further increases the power gap between the two \acp{UE} at the \ac{BS} receiver, and thus improves the performance of \ac{IC} at the first iteration. As the pathloss difference increases, then more elements should be allocated to boosting the weaker \ac{UE}. In this case, a natural power gap exists between the \acp{UE} due to the pathloss difference, and therefore the split should be chosen to boost the performance of the weak \ac{UE}, such that it is able to overcome the noise. At the end, we investigate a robust selection of the split factor based on minimizing the maximum outage between the two \acp{UE}, and show that such a split can be well predicted using our analysis.

\begin{acronym}[DSTTDSGRC]
\setlength{\itemsep}{-3pt}
\acro{CS}{compressed sensing}
\acro{ETF}{equiangular tight frame}
\acro{OGF}{orthoplectic Grassmannian frame}
\acro{NOMA}{non-orthogonal multiple access}
\acro{OMA}{orthogonal multiple access}
\acro{DFT}{discrete Fourier transform}
\acro{CDMA}{code-division multiple-access}
\acro{BCASC}{best complex antipodal spherical codes}
\acro{CBGC}{coherence-based Grassmannian codebook}
\acro{ICBP}{iterative collision-based packing}
\acro{MMSE}{minimum mean squared error}
\acro{MUSA}{multi-user shared access}
\acro{SIC}{successive interference cancellation}
\acro{SNR}{signal-to-noise ratio}
\acro{TDL-C}{tapped-delay-line-C}
\acro{LTE}{long-term evolution}
\acro{SINR}{signal-to-interference-plus-noise ratio}
\acro{SVD}{singular value decomposition}
\acro{KKT}{Karush-Kuhn-Tucker}
\acro{BLER}{block error ratio}
\acro{5G}{fifth-generation}
\acro{6G}{sixth-generation}
\acro{B5G}{beyond fifth-generation}
\acro{IoT}{internet-of-things}
\acro{PAPR}{peak-to-average-power ratio}
\acro{FFT}{fast-Fourier-transform}
\acro{IFFT}{inverse fast-Fourier-transform}
\acro{OFDM}{orthogonal frequency-division multiplexing}
\acro{BS}{base station}
\acro{UE}{user equipment}
\acro{MUD}{multiuser detection}
\acro{CWL}{codeword level}
\acro{MMSE}{minimum mean square error}
\acro{MF}{matched filter}
\acro{PIC}{parallel interference cancellation}
\acro{CRC}{cyclic-redundancy-check}
\acro{RB}{resource-block}
\acrodefplural{RB}{resource-blocks}
\acro{RMS}{root-mean-square}
\acro{DS}{delay spread}
\acro{LDPC}{low-density parity-check}
\acro{MIMO}{multiple-input multiple-output}
\acro{ITS}{intelligent transport systems}
\acro{V2X}{vehicle-to-everything}
\acro{V2V}{vehicle-to-vehicle}
\acro{V2I}{vehicle-to-infrastructure}
\acro{V2N}{vehicle-to-network}
\acro{V2P}{vehicle-to-pedestrian}
\acro{DSRC}{dedicated short-range communication}
\acro{C-V2X}{cellular-\ac{V2X}}
\acro{IEEE}{institute of electrical and electronics engineers}
\acro{MAC}{medium access control}
\acro{PHY}{physical}
\acro{CSMA}{carrier sense multiple access}
\acro{SB-SPS}{sensing-based semi-persistent scheduling}
\acro{5G-NR}{5th generation new-radio}
\acro{mMTC}{massive machine-type communication}
\acro{IGMA}{interleave-grid multiple access}
\acro{IDMA}{interleave-division multiple access}
\acro{ECDF}{empirical cumulative distribution function}
\acro{LLR}{log-likelhood-ratio}
\acro{IRS}{intelligent reflecting surface}
\acro{RIS}{reconfigurable intelligent surface}
\acro{LOS}{line-of-sight}
\acro{NLOS}{non-line-of-sight}
\acro{RV}{random variable}
\acro{CLT}{central limit theorem}
\acro{CDF}{cumulative distribution function}
\acro{IC}{interference cancellation}
\end{acronym}
\bibliographystyle{IEEEtran}
\bibliography{IEEEabrv,./LocalRefs}

\end{document}